\documentclass[reqno]{amsart}
\usepackage{amssymb,graphics,graphicx,bbm,hyperref, color,comment, dsfont}

\def\be{\begin{equation}}
\def\ee{\end{equation}}
\def\bm{\begin{multline}}
\def\bfig{\begin{figure}[htb]}
\def\efig{\end{figure}}
\newcommand{\dd}{{\rm d}}
\newcommand{\e}[1]{\,{\rm e}^{#1}\,}
\newcommand{\ii}{{\rm i}}

\def\Tr{{\operatorname{Tr\,}}}

\numberwithin{equation}{section}
\newtheorem{theorem}{Theorem}[section]

\newcommand{\caB}{{\mathcal B}}
\newcommand{\caC}{{\mathcal C}}
\newcommand{\caE}{{\mathcal E}}

\newcommand{\caH}{{\mathcal H}}
\newcommand{\caL}{{\mathcal L}}

\newcommand{\caS}{{\mathcal S}}
\newcommand{\bbC}{{\mathbb C}}

\newcommand{\bbN}{{\mathbb N}}
\newcommand{\bbP}{{\mathbb P}}
\newcommand{\bbR}{{\mathbb R}}

\newcommand{\bbZ}{{\mathbb Z}}

\makeatletter
  \def\tagform@#1{\maketag@@@{\footnotesize{(#1)}\@@italiccorr}}
\makeatother
\renewcommand{\eqref}[1]{(\ref{#1})}

\begin{document}


\title[Decay of correlations in 2D quantum systems]{Decay of correlations in 2D quantum systems with continuous symmetry}

\author{Costanza Benassi}
\address{Department of Mathematics, University of Warwick,
Coventry, CV4 7AL, United Kingdom}
\email{c.benassi@warwick.ac.uk}

\author{J\"urg Fr\"ohlich}
\address{Institut f\"ur Theoretische Physik, ETH Z\"urich, Wolfgang-Pauli-Str.\ 27,
8093 Z\"urich, Switzerland}
\email{juerg@phys.ethz.ch}

\author{Daniel Ueltschi}
\address{Department of Mathematics, University of Warwick,
Coventry, CV4 7AL, United Kingdom}
\email{daniel@ueltschi.org}

\subjclass{82B10, 82B20, 82B26, 82B31}

\keywords{quantum system, Mermin-Wagner theorem, decay of correlations}

\begin{abstract}
We study a large class of models of two-dimensional quantum lattice systems with continuous symmetries, and we prove a general McBryan-Spencer-Koma-Tasaki theorem concerning algebraic decay of correlations. We present applications of our main result to the Heisenberg-, Hubbard-, and $t$-$J$ models, and to certain models of random loops.
\end{abstract}

\thanks{\copyright{} 2016 by the authors. This paper may be reproduced, in its
entirety, for non-commercial purposes.}

\maketitle

{\small\tableofcontents}

\section{Introduction}

Absence of spontaneous magnetization in the two-dimensional quantum Hei\-sen\-berg model was proven by Mermin and Wagner in their seminal article \cite{MW}. Their result revealed some fundamental properties of two-dimensional systems with continuous symmetry. It was subsequently extended and generalized in several directions. In  \cite{FP1,FP2}, absence of continuous symmetry breaking in extremal Gibbs states was established for a large class of models. Fisher and Jasnow \cite{FJ} first explained why there isn't any long-range order in some two-dimensional models with continuous symmetries; the two-point correlation functions of the Heisenberg model were shown to decay at least logarithmically. Their decay is, however, expected to be power-law. This was first proven by McBryan and Spencer \cite{MS} for the classical rotor model in a short and lucid article. Their proof is based on complex rotations and extends to a large family of classical spin systems. Shlosman obtained similar results with a different method \cite{Shl}. Power-law decay was established for a wide class of classical systems and some quantum models (such as the ferromagnetic Heisenberg) in \cite{BFK,Ito}; the proofs in these papers involve the Fourier transform of correlations and the Bogolubov inequality, and they are limited to regular two-dimensional lattices. A more general result was obtained by Koma and Tasaki in their study of the Hubbard model, using complex rotations \cite{KT}. Their proof was simplified and applied to the XXZ spin-$\frac{1}{2}$ model on generic two-dimensional lattices in \cite{FU}.
In the work presented in \cite{BFK,FJ,FU,Ito,KT} specific models are considered, and proofs rely on explicit settings. The method of proof in \cite{KT} is however robust, and it can be expected to apply to a much broader class of models. Our goal in the present article is to propose a general setting accommodating many models of interest and prove a general result concerning the decay of correlations. As a consequence, we obtain new results for generalized SU(2)-invariant models with higher spins, for the $t$-$J$ model, and for random loop models, as well as obtaining a bound similar to \cite{KT} for the Hubbard model. It is worth noticing that in our setting the lattice is not necessarily regular --- indeed, our results hold for any two-dimensional graph.

Before describing the general setting, we introduce several explicit models; we start with SU(2)-invariant models of quantum spin systems (Section \ref{sec spins}), then consider some random loop models (Section \ref{sec loops}) and the Hubbard model (Section \ref{sec Hubbard}), and end with the $t$-$J$ model (Section \ref{sec t-J}).
In Section \ref{sec general} the general setting is introduced, and a general theorem is stated and proven.
Applications of our general results to the specific models introduced in Sections \ref{sec spins} through \ref{sec t-J} are presented in Section \ref{sec proofs}.

An open problem is to find a proof for bosonic systems, such as the Bose-Hubbard model. The present method relies on local operators to be bounded, and it does not generalize to bosonic systems in a straightforward way.

\section{Quantum spin systems}
\label{sec spins}

Let $\Lambda$ be a finite graph, with a set of edges denoted by $\mathcal{E}$. One may think of $\Lambda$ as a ``lattice''. The ``graph distance'' is the length of the shortest connected path between two vertices in 
$\Lambda$ and is denoted by $d:\Lambda\times\Lambda\rightarrow\bbN_0$. We consider graphs of arbitrary size, but with a bounded ``perimeter constant'' $\gamma$:
\be
\label{def gamma}
\gamma = \max_{x\in\Lambda}\max_{\ell\in\bbN}\frac{1}{\ell} \bigl| \bigl\{ y\in\Lambda \, |\, d(x,y)= \ell\bigr\}\bigr|,
\ee
which expresses their two-dimensional nature.
Typical examples of allowed graphs are finite subsets of $\bbZ^{2}$, where edges connect nearest-neighbor sites, in which case $\gamma=4$. Further examples are furnished by finite subsets of the triangular, hexagonal, or Kagom\'e lattices. It is worth mentioning that we do not assume translation invariance.

Let $s\in\frac{1}{2}\bbN$, and let $\vec{\mathcal{S}} = \left(\mathcal{S}^1,\mathcal{S}^2, \mathcal{S}^3\right)$ be the vector of spin-$s$ matrices acting on the Hilbert Space $\bbC^{2s+1}$, and satisfying 
$\left[\mathcal{S}^1, \mathcal{S}^2\right] = \ii \mathcal{S}^3$, together with all cyclic permutations, and 
$\sum_{i=1}^{3} (\caS^{i})^{2} = s (s+1)$. Moreover, we define ladder operators $\caS^\pm = \caS^1\pm i \caS^2$.

The most general SU(2) invariant hamiltonian with spin $s$ and pair interactions is of the form
\begin{equation}
\label{HSM}
H_{\Lambda} = -\sum_{\langle x, y\rangle \in \mathcal{E}} \sum_{k=1}^{2s} c_{k}(x,y)\left(\vec{\mathcal{S}}_{x}\cdot\vec{\mathcal{S}}_{y}\right)^k.
\end{equation}
Here the letters $c_k(x,y)$ denote the coupling constants, and $\mathcal{S}^i_x := \mathcal{S}^i \otimes \mathds{1}_{\Lambda\backslash x}$. The hamiltonian $H_{\Lambda}$ acts on the Hilbert space $\mathcal{H}_\Lambda := \bigotimes_{x\in\Lambda}\bbC^{2s+1}$. The Gibbs state at inverse temperature $\beta$ is given by $$\langle (\cdot) \rangle = \Tr \big(( \cdot) \e{-\beta H_{\Lambda}}\big) / \Tr\big(\e{-\beta H_{\Lambda}}\big).$$ 
Without loss of generality we assume that
\be
\sum_{k}|c_k(x,y)| \left(3s^2\right)^k \leq 1
\label{assumptionHSM}
\ee
for all $x,y \in \Lambda$. Invariance under SU(2) implies that the hamiltonian of this model commutes with the component of the total spin operator along any of the three coordinate axes. We have that
\be
\label{comm spin}
\left[ \vec{\mathcal{S}}_{x}\cdot\vec{\mathcal{S}}_{y}, \, \mathcal{S}^i_x+\mathcal{S}^i_y\right]=0,
\ee
for $i=1,2,3$. Actually, we will only exploit invariance of the hamiltonian under rotations around a single axis to get an inverse-power-law bound on the decay of correlations.

\begin{theorem}
\label{thm HSM}
Let $H_{\Lambda}$ be the hamiltonian defined in \eqref{HSM}. There exist constants $C>0$ and 
$\xi(\beta)>0$, the latter depending on $\beta, \gamma, s$ but \textit{not} on $x,y\in\Lambda$, such that
\[
|\langle \mathcal{S}^j_x \mathcal{S}^j_y\rangle| \leq C \left(d(x,y)+1\right)^{-\xi(\beta)}.
\]
More generally, for $\mathcal{O}_y\in\mathcal{B}_y$,
\[
|\langle \mathcal{S}^+_x \mathcal{O}_y\rangle| \leq C \left(d(x,y)+1\right)^{-\xi(\beta)}.
\]
The exponent $\xi(\beta)$ is proportional to $\beta^{-1}$; more precisely,
\[
\lim_{\beta \rightarrow \infty}\beta\,\xi(\beta) = (32s\gamma^2)^{-1}.
\]
\end{theorem}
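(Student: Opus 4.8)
The plan is to adapt the complex-rotation method of McBryan--Spencer and Koma--Tasaki. First, by full SU(2) invariance of the Gibbs state the three correlators $\langle\mathcal{S}^j_x\mathcal{S}^j_y\rangle$ ($j=1,2,3$) coincide, and $|\langle\mathcal{S}^1_x\mathcal{S}^1_y\rangle|\le\tfrac12|\langle\mathcal{S}^+_x\mathcal{S}^-_y\rangle|$ because the charge-$\pm2$ terms vanish under rotations about the third axis; hence the first bound is the special case $\mathcal{O}_y=\mathcal{S}^-_y$ of the second, and I would prove only the latter. I would introduce the self-adjoint generator $G=\sum_{z\in\Lambda}a_z\mathcal{S}^3_z$ for a real field $a:\Lambda\to\bbR$ to be chosen, and the (non-self-adjoint) tilted hamiltonian $H_a=\e{G}H_\Lambda\e{-G}$. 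Using cyclicity in the form $\Tr M=\Tr(\e{G}M\e{-G})$ together with the exact ladder relation $\e{G}\mathcal{S}^+_x\e{-G}=\e{a_x}\mathcal{S}^+_x$, one obtains
\[
\langle\mathcal{S}^+_x\mathcal{O}_y\rangle=\e{a_x}\,\frac{\Tr\!\big(\mathcal{S}^+_x\,(\e{G}\mathcal{O}_y\e{-G})\,\e{-\beta H_a}\big)}{\Tr\e{-\beta H_\Lambda}} .
\]

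Next I would reduce everything to a self-adjoint problem. Since $G$ and $H_\Lambda$ are self-adjoint, $H_a^{*}=H_{-a}$, so the self-adjoint part $H_a^{R}=\tfrac12(H_a+H_{-a})$ is explicit. I would invoke the matrix inequality $|\Tr(A\,\e{Z})|\le\|A\|\,\Tr\e{\Re Z}$, valid for any $A$ and any $Z$ with $\Re Z=\tfrac12(Z+Z^{*})$, which follows from $\|\e{Z}\|_1\le\Tr\e{\Re Z}$ (log-majorization of the singular values of $\e{Z}$ by $\exp$ of the eigenvalues of $\Re Z$) and the Hölder bound $|\Tr(A\e{Z})|\le\|A\|\,\|\e{Z}\|_1$. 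Applied with $Z=-\beta H_a$ this yields
\[
|\langle\mathcal{S}^+_x\mathcal{O}_y\rangle|\le\e{a_x}\,\|\mathcal{S}^+_x\|\,\|\e{G}\mathcal{O}_y\e{-G}\|\,\frac{\Tr\e{-\beta H_a^{R}}}{\Tr\e{-\beta H_\Lambda}} .
\]
Because $\Tr\e{(\cdot)}$ is monotone in the operator order, $\Tr\e{-\beta H_a^{R}}\le\e{\beta\|H_a^{R}-H_\Lambda\|}\Tr\e{-\beta H_\Lambda}$, so the partition-function ratio is controlled by $\|H_a^{R}-H_\Lambda\|$. I would localize $a$ so that it is large and negative near $x$ but vanishes on a neighborhood of $y$; then $\e{a_x}=\e{-L}$ is the small factor while $\|\e{G}\mathcal{O}_y\e{-G}\|=\|\mathcal{O}_y\|$ stays bounded (and for $\mathcal{O}_y=\mathcal{S}^-_y$ this localization is unnecessary).

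The heart of the argument is a quadratic bound on $\|H_a^{R}-H_\Lambda\|$ in terms of the increments $a_u-a_v$. Writing $\vec{\mathcal{S}}_u\cdot\vec{\mathcal{S}}_v=\tfrac12(\mathcal{S}^+_u\mathcal{S}^-_v+\mathcal{S}^-_u\mathcal{S}^+_v)+\mathcal{S}^3_u\mathcal{S}^3_v$, conjugation by $\e{G}$ multiplies the ladder terms by $\e{\pm(a_u-a_v)}$ and fixes $\mathcal{S}^3_u\mathcal{S}^3_v$; the symmetric part replaces $\e{\pm(a_u-a_v)}$ by $\cosh(a_u-a_v)$, so for $k=1$ the difference from the untilted term is $(\cosh(a_u-a_v)-1)(\mathcal{S}^1_u\mathcal{S}^1_v+\mathcal{S}^2_u\mathcal{S}^2_v)$. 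For the powers $(\vec{\mathcal{S}}_u\cdot\vec{\mathcal{S}}_v)^k$ I would expand $\e{tG}(\cdot)\e{-tG}$ in nested commutators with $G$: the first-order (odd) term cancels in $H_a^{R}$, and the surviving even part is $O((a_u-a_v)^2)$ with norm controlled by the normalization \eqref{assumptionHSM}. Summing over edges gives a Dirichlet-type bound $\|H_a^{R}-H_\Lambda\|\le \tfrac12 K\sum_{\langle u,v\rangle}(a_u-a_v)^2(1+o(1))$ as the gradients vanish, with $K$ proportional to $s$ from the operator norms. Extracting the sharp value of $K$ is the main obstacle, since the non-commutativity of the $\mathcal{S}^i$ and the powers $k$ make the bookkeeping delicate.

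Finally I would optimize over $a$. Taking $a_z=-L\,h(z)$ with the capped logarithmic profile $h(z)=\big(1-\log(d(x,z)+1)/\log(d(x,y)+1)\big)_+$ gives $a_x-a_y=-L$ and, using \eqref{def gamma} to bound both the size $\le\gamma\ell$ of the distance-$\ell$ sphere and the vertex degree $\le\gamma$, a Dirichlet energy $\sum_{\langle u,v\rangle}(a_u-a_v)^2\le C\gamma^2 L^2/\log(d(x,y)+1)$. Collecting the estimates produces a bound of the form $\exp\!\big(-L+\beta K C\gamma^2 L^2/(2\log(d(x,y)+1))\big)$; minimizing over $L$ (optimal $L\propto\log(d(x,y)+1)/\beta$) gives the power law $(d(x,y)+1)^{-\xi(\beta)}$ with $\xi(\beta)\asymp(\beta\gamma^2 s)^{-1}$, and careful tracking of $K$ and $C$ yields $\lim_{\beta\to\infty}\beta\,\xi(\beta)=(32 s\gamma^2)^{-1}$. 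The two factors of $\gamma$ are exactly the sphere-size and degree bounds implied by \eqref{def gamma}, the factor $s$ is the operator-norm scale, and the cleanness of the limit reflects that the quadratic approximation becomes exact as the optimal increments tend to $0$.
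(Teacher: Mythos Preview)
Your approach is essentially the same as the paper's: both use the McBryan--Spencer--Koma--Tasaki complex rotations generated by $\sum_z \theta_z \mathcal{S}^3_z$, split the tilted Hamiltonian into its hermitian and anti-hermitian parts, bound only the hermitian (even-commutator) part, choose the capped logarithmic profile for $\theta_z$, and optimize. The paper packages this as an application of its general Theorem~\ref{thm general} with $S_x=s^{-1}\mathcal{S}^3_x$, $O_{xy}=\mathcal{S}^+_x\mathcal{O}_y$, $c=s^{-1}$, while you carry the argument out directly. The one technical variation is in the trace bound: where the paper uses the Trotter formula together with H\"older's inequality for Schatten norms (so that the anti-hermitian factor contributes $\|e^{-\beta B/n}\|_\infty=1$), you instead invoke $\|e^Z\|_1\le\Tr e^{\Re Z}$ followed by operator monotonicity of $\Tr e^{(\cdot)}$; both routes yield the identical estimate $|\Tr(O_{xy}e^{-\beta H_a})|\le\|O_{xy}\|\,e^{\beta\|C\|}\Tr e^{-\beta H_\Lambda}$ with $C=H_a^R-H_\Lambda$.
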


We could also consider models with interactions that are asymmetric with respect to different spin directions. If such models retain a U(1)-symmetry, our main theorem and its proof can easily be seen to remain valid. In the absence of a non-abelian continuous symmetry we predict the expected behavior for $\xi(\beta)$. Indeed, a Berezinski-Kosterlitz-Thouless transition is expected to take place, the decay of correlations changing from exponential to power law with an exponent proportional to $\beta^{-1}$, for large $\beta$. This has been proven for the classical XY model in \cite{FS}. For models with SU(2)-symmetry, one expects exponential decay for all positive temperatures.

The proof of Theorem \ref{thm HSM} can be found in Section \ref{sec proofs}.

\section{Random loop models}
\label{sec loops}

Models of random loops have been introduced as representations of quantum spin systems \cite{Toth,AN,Uel}. They are increasingly popular in probability theory. A special example is the ``random interchange model'' where the outcomes are permutations given by products of random transpositions. We present a theorem concerning the decay of loop correlations that is plausible in the context of quantum spins, but is quite surprising in the probabilistic context.

To each edge of the graph $(\Lambda,\caE)$ is attached the ``time'' interval $[0,\beta]$. Independent Poisson point processes result in the occurrences of ``crosses'' with intensity $u$ and ``double bars'' with intensity $1-u$, where $u \in [0,1]$ is a parameter. This means that, on the edge $\{x,y\} \in \caE$ and in the infinitesimal time interval $[t,t+\dd t] \subset [0,\beta]$, a cross appears with probability $u \dd t$, a double bar appears with probability $(1-u) \dd t$, and nothing appears with probability $1 - \dd t$. We denote by $\rho$ the measure and by $\omega$ its realizations.

Given a realisation $\omega$, loops are closed trajectories of travelers traveling along the time direction, with periodic boundary conditions at $0$ and $\beta$, who usually rest on a site of $\Lambda$, but jump to a neighboring site whenever a cross or a double bar is present. If a cross is encountered, the trajectory continues in the same direction of the time axis; at a double bar, the trajectory continues in the opposite time direction; see the illustration in Fig.\ \ref{fig loops}. We let $\caL(\omega)$ denote the set of loops of the realization $\omega$. Notice that $|\caL(\omega)| < \infty$ with probability 1.

\bfig
\includegraphics{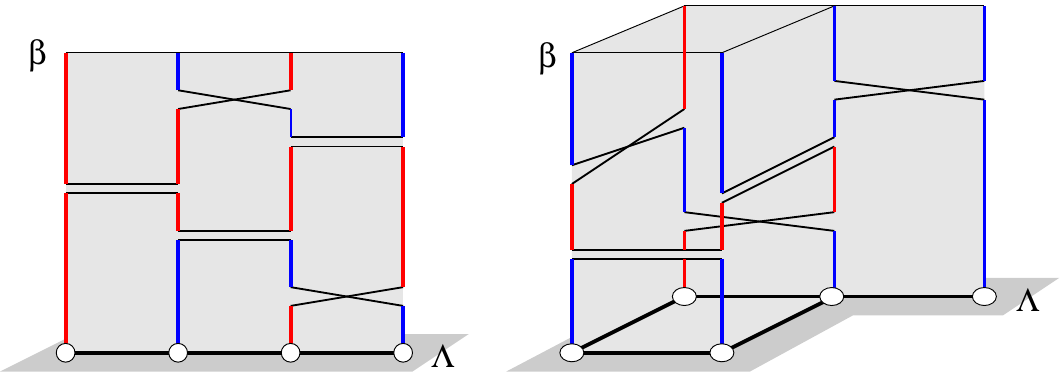}
\caption{Illustrations for the random loop models. The vertices all lie in the horizontal plane and random crosses and double bars occur in the ``time'' intervals $[0,\beta]$ on top of each edges. In both of these examples, the realizations have exactly two loops, denoted in red and blue.}
\label{fig loops}
\efig

The partition function of the model is given by
\be
\label{Z_u}
Z = \int \theta^{|\caL(\omega)|} \rho(\dd\omega),
\ee
where $\theta>0$ is some parameter.
The ``equilibrium" measure is defined by
\be
\mu(\dd\omega) = \frac{1}{Z} \theta^{|\caL(\omega)|} \rho(\dd\omega).
\ee
The special example where $u=1$ and $\theta=1$ is the random interchange model; crosses stand for transpositions, and the loops are equivalent to permutation cycles.

We will prove the following result on the probability, $\mathbb{P}(x\leftrightarrow y)$, of two sites,  $x,y$, to belong to the same loop.

\begin{theorem}
\label{thm loops}
Let $\theta=2,3,4,\dots$, $u \in [0,1]$ and $x,y\in\Lambda$. There exist positive constants $C>0$  and 
$\xi(\beta)>0$, the latter depending on $\beta$, $\gamma$, $\theta$, $u$ but not on $x,y\in\Lambda$, such that
\[
\bbP(x \leftrightarrow y) \leq C \left(d(x,y)+1\right)^{-\xi(\beta)}.
\]
The asymptotics of $\xi(\beta)$ for large values of $\beta$ is given by
\[
\lim_{\beta \rightarrow \infty} \beta \,\xi(\beta)= \bigl[ 8\gamma^2 (\theta-1)^2 (u+(1-u)\theta+1) \bigr]^{-1}.
\]
\end{theorem}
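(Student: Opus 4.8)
The plan is to use the fact, recalled at the start of Section \ref{sec loops}, that the random loop model is a probabilistic representation of an SU(2)-invariant quantum spin system, and then to obtain the bound from the general theorem of Section \ref{sec general} applied to that system --- not from Theorem \ref{thm HSM}, whose normalization \eqref{assumptionHSM} is tuned to a different scale and would give a different exponent. For $\theta \in \{2,3,4,\dots\}$ set $s = \tfrac{\theta-1}{2} \in \tfrac12\bbN$, so that $\theta = 2s+1$. The Poisson process with crosses of intensity $u$ and double bars of intensity $1-u$, weighted by $\theta^{|\caL(\omega)|}$, is the expansion of $\Tr\bigl(\,\cdot\,\e{-\beta H}\bigr)/\Tr\bigl(\e{-\beta H}\bigr)$ for $H = -\sum_{\{x,y\}\in\caE}\bigl[u\,T_{xy} + (1-u)\,Q_{xy}\bigr]$, where $T_{xy}$ is the transposition of the spins at $x$ and $y$ and $Q_{xy}$ is $\theta$ times the orthogonal projection onto the spin-singlet pair state. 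Both terms are SU(2)-invariant, so $H$ commutes with $\caS^3_x + \caS^3_y$ on each edge and hence with the total third component of spin; this is the U(1) symmetry we exploit.

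The first step I would carry out is to record the dictionary between loop connectivity and a correlation function of this Gibbs state. In the representation, two sites lying on a common loop carry spins that are perfectly correlated up to the relative orientation of the loop at those sites, and tracing over the internal states of each loop produces the weight $\theta^{|\caL(\omega)|}$. Keeping track of these orientations yields an identity expressing $\bbP(x\leftrightarrow y)$ as a positive constant $c(s)$ times a genuine two-point correlation $\langle \mathcal{O}_x\mathcal{O}_y\rangle$ of operators $\mathcal{O}_x$ that raise and lower the $\caS^3$-charge (schematically of the type $\langle\caS^+_x\caS^-_y\rangle$, which by SU(2) equals $2\langle\caS^3_x\caS^3_y\rangle$). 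The prefactor $c(s)$, of order $(\theta^2-1)^{-1}$, is absorbed into the constant $C$ of the statement, so it does not affect the exponent; what matters is that $\bbP(x\leftrightarrow y)$ is now bounded by a correlation of exactly the ``$\caS^+_x\mathcal{O}_y$'' form covered by the general theorem.

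The second step is to feed $H$ and this observable into the general theorem and read off the constant. The exponent it produces is $\beta^{-1}$ times the reciprocal of $\gamma^2$, the squared local charge-commutator scale, and the total local weight of an edge term. Here the squared scale is $\|[\caS^3_x, T_{xy}]\|^2 = (\theta-1)^2$ --- since $T_{xy}\caS^3_x T_{xy} = \caS^3_y$ gives $[\caS^3_x,T_{xy}] = (\caS^3_x-\caS^3_y)T_{xy}$ of norm $2s = \theta-1$ --- and the local weight comes from $u\|T_{xy}\| + (1-u)\|Q_{xy}\|$ together with the unit contribution of the rotation, using $\|T_{xy}\| = 1$ and $\|Q_{xy}\| = \theta$; this gives the factor $u+(1-u)\theta+1$. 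Collecting the numerical constant $8$ from the general estimate reproduces $\lim_{\beta\to\infty}\beta\,\xi(\beta) = \bigl[8\gamma^2(\theta-1)^2(u+(1-u)\theta+1)\bigr]^{-1}$.

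The decay mechanism itself --- complex rotation of the U(1) symmetry in the spirit of McBryan--Spencer and Koma--Tasaki --- is already contained in the general theorem, so the real work, and the main obstacle, lies in the first two steps: establishing the correlation identity for $\bbP(x\leftrightarrow y)$ with the correct handling of loop orientation and of the $\theta$-dependence introduced by the trace over internal loop states, and then verifying that $H$ and the charged observable satisfy the hypotheses of the general theorem with precisely the norms $\|T_{xy}\|=1$, $\|Q_{xy}\|=\theta$, $\|[\caS^3_x,T_{xy}]\|=\theta-1$ needed to produce the stated constant.
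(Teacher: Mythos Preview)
Your overall strategy---pass to the quantum spin representation and invoke Theorem~\ref{thm general}---is exactly what the paper does, and the dictionary $\langle\caS^3_x\caS^3_y\rangle=\tfrac1{12}(\theta^2-1)\,\bbP(x\leftrightarrow y)$ you allude to is indeed the link used. But there is a real error in your symmetry claim. For $s>\tfrac12$ the operator $Q_{xy}$ in this loop model is \emph{not} $\theta$ times the projection onto the SU(2) singlet; in the conventions of \cite{Uel} and of the paper it is the rank-one operator with matrix elements $(e_i\otimes e_j,\,Q\,e_l\otimes e_k)=\delta_{ij}\delta_{lk}$, and one only has $[Q_{xy},\caS^2_x+\caS^2_y]=0$. (Already for $\theta=3$, the vector $\sum_i e_i\otimes e_i$ contains components with total $\caS^3$ equal to $\pm2$, so $[Q_{xy},\caS^3_x+\caS^3_y]\neq0$.) Your choice of generator $\caS^3$ therefore violates hypothesis \eqref{symm}, and the general theorem does not apply as you set it up. The paper takes $S_x=\tfrac1s\caS^2_x$ instead and works with the rotated ladder operators $\mathcal Q^\pm=\caS^1\pm i\caS^3$; the residual symmetry $\langle\caS^1_x\caS^1_y\rangle=\langle\caS^3_x\caS^3_y\rangle$ then yields $\langle\mathcal Q^+_x\mathcal Q^-_y\rangle=2\langle\caS^3_x\caS^3_y\rangle$, which is what gets fed into Theorem~\ref{thm general}.

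Your accounting for the constant is also misattributed. The factor $(\theta-1)^2$ does not come from $\|[\caS^3_x,T_{xy}]\|$; that commutator never enters Theorem~\ref{thm general}. The theorem gives $\lim_{\beta\to\infty}\beta\,\xi(\beta)=c^2/(8\gamma\|\Phi\|_0)$, and the normalization \eqref{norm S} forces $S_x=\caS^2_x/s$, so $c=s^{-1}=2/(\theta-1)$ in \eqref{comm}. The factor $u+(1-u)\theta+1$ comes from bounding $\|\Phi\|_K\le 2^{2K+2}\gamma\bigl(u+(1-u)\theta+1\bigr)$ via $\|T\|=1$, $\|Q\|=\theta$, and the constant shift $-1$ that the paper keeps in $H_\Lambda=-\sum_{\{x,y\}\in\caE}(uT_{xy}+(1-u)Q_{xy}-1)$; it is not a ``unit contribution of the rotation.'' Plugging $c=2/(\theta-1)$ and $\|\Phi\|_0\le4\gamma(u+(1-u)\theta+1)$ into $c^2/(8\gamma\|\Phi\|_0)$ recovers the stated limit.
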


The proof is based on the correspondence that exists between this random loop model and certain quantum spin systems with continuous symmetry. This allows us to use the general result in Theorem \ref{thm general}, see Section \ref{sec proofs} for details. Such correspondence exists only for the values of $\theta$ specified in the theorem. We expect that the result holds for all $\theta>0$, though.

\section{The Hubbard Model}
\label{sec Hubbard}

Let $\Lambda$ be a finite region in a lattice. We define standard fermionic creation- and annihilation operators, $c^\dagger_{\sigma, x}$, $c_{\sigma, x}$, $x \in \Lambda$, $\sigma = \uparrow, \downarrow$, for spin-$\frac{1}{2}$ fermions. These operators act on a Hilbert space, $\mathcal{H}_{x}$, associated with site $x$ and defined by
$\mathcal{H}_x = \mbox{span}\{0, \uparrow,\downarrow,\uparrow\downarrow\}\simeq \bbC^4$. The creation and annihilation operators satisfy the usual anticommutation relations, $\left\{c_{\sigma,x}, c^\dagger_{\sigma',y}\right\}= \delta_{xy}\delta_{\sigma \sigma'}$. 

The Hubbard model is a model of electrons, which are spin-$\frac{1}{2}$ fermions, described in a tight-binding approximation. We consider a general family of such models, including ones with electron hopping amplitudes, $\lbrace t_{xy} \rbrace$, of long range. The hamiltonian is given by
\be
\label{HHM2}
H_\Lambda = -\frac{1}{2}\sum_{x,y\in\Lambda} \sum_{\sigma = \uparrow,\downarrow}t_{xy} \big( c^\dagger_{\sigma, x}c_{\sigma, y}+ c^\dagger_{\sigma,y}c_{\sigma,x} \big) + 
V\left(\{n_{\uparrow,x}\}_{x\in\Lambda} , \{n_{\downarrow,x}\}_{x \in \Lambda}\right) .
\ee
The number operators, $n_{\sigma, x}, \sigma = \uparrow, \downarrow$, are defined in the usual way: $n_{\sigma,x} = c^{\dagger}_{\sigma,x}c_{\sigma,x}$, and $n_x = n_{\uparrow,x} + n_{\downarrow,x}$. 
It is assumed that the hamiltonian of the model is only invariant
under rotations around the 3-axis in spin space, which form a U(1)-
symmetry group. Accordingly, the potential
$V(\{ n_{\sigma, x} \})$
is only assumed to depend on the occupation numbers $n_{\sigma, x}, \sigma \in
\lbrace\uparrow, \downarrow \rbrace$, $x \in \Lambda$; but no further assumptions
are needed.

Under the extra assumption that $V$ depends on $\{n_x\}_{x\in\Lambda}$ instead of $\{n_{\uparrow,x}\}_{x\in\Lambda} , \{n_{\downarrow,x}\}_{x \in \Lambda}$, the hamiltonian in \eqref{HHM2} exhibits an SU(2) symmetry, with generators
\be
\mathcal{S}_x^{+} = c^\dagger_{\uparrow,x}c_{\downarrow,x}, \qquad\mathcal{S}^-_x= (\mathcal{S}_x^+)^\dagger, \qquad \mathcal{S}_x^3=\tfrac{1}{2}(n_{\uparrow,x}-n_{\downarrow,x})\label{SU(2)a}.
\ee
For background on the Hubbard model, we recommend the excellent review \cite{Lieb}. 

The hamiltonian \eqref{HHM2} still enjoys two U(1) symmetries, which is enough for our purpose. The first symmetry corresponds to the conservation of the spin component along the third axis, namely
\be
\label{comm s3 Hubbard}
\Bigl[ \sum_{\sigma = \uparrow,\downarrow}t_{xy} \left(c^\dagger_{\sigma, x}c_{\sigma, y}+ c^\dagger_{\sigma,y}c_{\sigma,x} \right), \caS^3_x +\caS^3_y\Bigr]=0.
\ee
The second symmetry deals with the conservation of the number of particles:
\begin{equation}
\label{U(1) Hubbard}
\Bigl[\sum_{\sigma = \uparrow,\downarrow}t_{xy} \left(c^\dagger_{\sigma, x}c_{\sigma, y}+ c^\dagger_{\sigma,y}c_{\sigma,x} \right), \sum_{\sigma=\uparrow,\downarrow}\left(n_{\sigma,x}+n_{\sigma,y}\right)\Bigr]=0.
\end{equation}
Different symmetries can be used to estimate the decay of different correlation functions.
Specifically, we analyze three different two-point functions:
\begin{itemize}
\item[(i)] $\langle c^\dagger_{\uparrow, x}c_{\downarrow,x}c^\dagger_{\downarrow, y}c_{\uparrow,y}\rangle$, measuring magnetic long-range order;
\item[(ii)] $\langle c_{\uparrow,x}^\dagger c_{\downarrow,x}^\dagger c_{\uparrow,y} c_{\downarrow,y}\rangle$, related to Cooper pairs and superconductivity;
\item[(iii)] $\langle c^\dagger_{\sigma,x} c_{\sigma,y}\rangle$, measuring off-diagonal long-range order.
\end{itemize}
The latter two correlation functions have been studied in \cite{KT, MR}. In \cite{KT}, their decay is studied with the help of a method similar to ours, and under the condition that $t_{xy}=0$ if $d(x,y)\geq R$, for some positive $R$. In \cite{MR}, it is assumed that $t_{xy}$ decays rather rapidly, more precisely, $t_{xy}\sim t\,d(x,y)^{-\alpha}$, with $\alpha>4$ and $t$ some constant. We will see that we have to require the same conditions in order for the general result in Theorem \ref{thm general} to be applicable.

\begin{theorem}
\label{thm HHM}
Let $H_{\Lambda}$ be the hamiltonian of the Hubbard model \eqref{HHM2} defined on the lattice $\Lambda$, and $x,y\in\Lambda$. Suppose that $t_{xy}= t (d(x,y)+1)^{-\alpha}$ with $\alpha>4$. Then there exist $C>0$, $\xi(\beta)>0$ (the latter depending on $\beta$, $\gamma$, $\alpha$, $t$, but not on $x,y\in\Lambda$) such that
\[
\left.
\begin{array}{c}
|\langle c^\dagger_{\uparrow, x}c_{\downarrow,x}c^\dagger_{\downarrow, y}c_{\uparrow,y}\rangle|\\
 |\langle c_{\uparrow,x}^\dagger c_{\downarrow,x}^\dagger c_{\uparrow,y} c_{\downarrow,y}\rangle|\\
|\langle c_{\sigma,x}^\dagger  c_{\sigma,y}\rangle|
\end{array}
\right\}\leq C (d(x,y)+1)^{-\xi(\beta)}
\]
where $\sigma\in\{\uparrow,\downarrow\}$ in the last line. Furthermore,
\[
\lim_{\beta\rightarrow\infty}\beta\,\xi(\beta)= \Bigl( 64\gamma ^2 |t| \sum_{r\geq 1} r^{-\alpha + 3} \Bigr)^{-1}.
\]
\end{theorem}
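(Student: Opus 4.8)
The plan is to derive all three bounds from the general Theorem~\ref{thm general} by exhibiting, for each correlation function, a one-parameter family of U(1) rotations under which the relevant pair of operators transforms covariantly while the hopping part of $H_\Lambda$ stays invariant. For the magnetic correlation (i) I would write $c^\dagger_{\uparrow,x}c_{\downarrow,x}=\caS^+_x$ and $c^\dagger_{\downarrow,y}c_{\uparrow,y}=\caS^-_y$ and use the spin rotations generated by $\caS^3_z=\tfrac12(n_{\uparrow,z}-n_{\downarrow,z})$, whose invariance is exactly \eqref{comm s3 Hubbard}; under $\e{\sum_z a_z\caS^3_z}$ the product acquires the factor $\e{a_x-a_y}$. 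For the Cooper-pair correlation (ii) and the off-diagonal correlation (iii) I would instead use the charge rotations generated by $n_z$, whose invariance is \eqref{U(1) Hubbard}; since $c^\dagger_{\uparrow,x}c^\dagger_{\downarrow,x}$ carries charge $2$ and $c^\dagger_{\sigma,x}$ carries charge $1$, the covariance factors are $\e{2(a_x-a_y)}$ and $\e{a_x-a_y}$ respectively. In every case the remaining local operators ($\caS^\pm$, the pair operators, and $c^\dagger_\sigma$) are bounded, which is the only structural hypothesis of Theorem~\ref{thm general}.

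With covariance and invariance in hand, Theorem~\ref{thm general} reduces each bound to minimizing over real ``rotation fields'' $\{a_z\}_{z\in\Lambda}$ an expression of the form $\exp\bigl(-q(a_x-a_y)+\beta\,\kappa\sum_{z,w}|t_{zw}|(a_z-a_w)^2\bigr)$, valid to the quadratic order that governs the large-$\beta$ asymptotics; here $q\in\{1,2\}$ is the charge and $\kappa$ a fixed combinatorial constant coming from the norms of the edge operators $c^\dagger_{\sigma,z}c_{\sigma,w}+\text{h.c.}$, while the factor multiplying $|t_{zw}|$ is a genuine Dirichlet-type energy of $a$. I would then choose $a_z$ to be the McBryan--Spencer logarithmic profile centered at $x$ and cut off at the scale $d(x,y)$, so that $a_x-a_y$ grows like $\log d(x,y)$ whereas the gradient of $a$ across the shell at distance $\ell$ from $x$ is of order $1/\ell$.

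The decisive step is the estimate of the Dirichlet energy for the long-range hopping. Using the perimeter bound \eqref{def gamma} to count at most $\gamma\ell$ sites in the shell at distance $\ell$ and at most $\gamma r$ sites at distance $r$ from a given site, together with $|t_{zw}|\le|t|\,d(z,w)^{-\alpha}$ and the path bound $(a_z-a_w)^2\lesssim d(z,w)^2/\ell^2$ in the transition region, the contribution of pairs at separation $r$ and shell $\ell$ is of order $|t|\gamma^2\,r^{-\alpha+3}/\ell$. Summing over $\ell$ reproduces the $\log d(x,y)$ that matches $a_x-a_y$, while the sum over $r$ converges precisely when $\alpha>4$ and yields the factor $\sum_{r\ge1}r^{-\alpha+3}$. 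Optimizing over the overall amplitude of $a$ converts the two logarithms into a power law with exponent $\xi(\beta)$ of order $\bigl(\beta\gamma^2|t|\sum_r r^{-\alpha+3}\bigr)^{-1}$, and tracking $\kappa$ and the charge $q$ through the optimization fixes the leading coefficient as $(64\gamma^2|t|\sum_{r\ge1}r^{-\alpha+3})^{-1}$.

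I expect the main obstacle to be twofold. First, one must control the long-range Dirichlet energy uniformly in the graph so that its divergent part is exactly the $\log d(x,y)$ needed to cancel against $a_x-a_y$; this is where the hypothesis $\alpha>4$ is forced, since otherwise the $r$-sum diverges and no power law survives. Second, pinning down the precise constant $64$ requires careful bookkeeping of the factor $\tfrac12$ in the hopping term, the two spin species, the charge $q$ of each correlation function, and the exact norm of the edge operators, all of which feed into $\kappa$ and hence into the $\beta\to\infty$ limit.
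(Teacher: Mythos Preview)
Your proposal is correct and follows the same underlying idea as the paper: exploit the spin-U(1) symmetry \eqref{comm s3 Hubbard} for correlator (i) and the charge-U(1) symmetry \eqref{U(1) Hubbard} for (ii) and (iii), then run the McBryan--Spencer complex rotation with a logarithmic profile. The difference is purely organizational. The paper's proof does \emph{not} re-derive the Dirichlet-energy estimate; that work has already been absorbed into Theorem~\ref{thm general} through the $K$-norm \eqref{norm}. All that remains is to verify the hypotheses: choose $S_z$ with $\|S_z\|=1$ (so $S_z=n_{\uparrow,z}-n_{\downarrow,z}$ for (i) and $S_z=\tfrac12(n_{\uparrow,z}+n_{\downarrow,z})$ for (ii), (iii)), read off $c=2,1,\tfrac12$ respectively, and compute
\[
\|\Phi\|_K \;\leq\; 2|t|\gamma\sum_{r\geq 1} r^{-(\alpha-2K-3)},
\]
which is finite for some $K>0$ precisely when $\alpha>4$. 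The asymptotic exponent then comes directly from the formula $c^2/(8\gamma\|\Phi\|_0)$ in Theorem~\ref{thm general}, evaluated at the worst case $c=\tfrac12$, giving $(64\gamma^2|t|\sum_{r\geq1}r^{-\alpha+3})^{-1}$ with no further bookkeeping. Your plan to redo the shell-by-shell summation and optimize the amplitude of the rotation field by hand would work, but it duplicates the proof of the general theorem rather than applying it.
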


We note that this theorem asserts that there is a power-law upper bound on the decay of correlation functions, provided 
$\alpha>4$. As explained in the proof (Section \ref{sec proofs}), this condition is necessary to ensure finiteness of the $K$-norm  of the interaction (see Eq.\ \eqref{norm}) independently of the size of $\Lambda$.

\section{The $t$-$J$ model}
\label{sec t-J}

A well known variant of the Hubbard model is given by the $t$-$J$ model. The hamiltonian of this model is given by 

\begin{equation}
H_\Lambda = -\frac{t}{2}\sum_{\langle x,y\rangle\in\mathcal{E}}\sum_{\sigma = \uparrow,\downarrow}\left(c^\dagger_{\sigma, x}c_{\sigma, y}+c^\dagger_{\sigma,y}c_{\sigma,x}\right) +J\sum_{\langle x,y\rangle\in\mathcal{E}}\Bigl(\vec{\mathcal{S}}_{x}\cdot \vec{\mathcal{S}}_{y}-\frac{1}{4} n_{x}n_{y} \Bigr).
\label{HtJM}
\end{equation}
The parameters $t$ and $J $ are real numbers, and $\mathcal{S}^i_x = \sum_{\sigma,\mu = \uparrow,\downarrow} c^\dagger_{\sigma,x} \tau^{i}_{\sigma,\mu} c_{\mu,x}$, with $i = \{1,2,3\}$, and $\tau^1$, 
$\tau^2$, $\tau^3$ are the three Pauli matrices for particles of spin $\frac{1}{2}$. Explicitly,
\be
\begin{split}
\mathcal{S}^1_{x} =& \tfrac{1}{2} \bigl( c^\dagger_{\uparrow,x}c_{\downarrow x} + c^\dagger_{\downarrow,x}c_{\uparrow y} \bigr),\\ 
\mathcal{S}^2_{x} =& - \tfrac{\ii}{2} \bigl( c^\dagger_{\uparrow,x}c_{\downarrow x} - c^\dagger_{\downarrow,x}c_{\uparrow y} \bigr),\\ 
\mathcal{S}^3_{x} =& \tfrac{1}{2}(n_{\uparrow,x}-n_{\downarrow,x}).
\end{split}
\ee
These are the generators of a representation of the symmetry group SU(2) on the state space of the model, as previously introduced for the Hubbard model (see Eq.\ \eqref{SU(2)a}). In the $t$-$J$ model the number of particles and the component of the total spin along, for example, the third axis are conserved -- i.e., the model exhibits two $U(1)$ symmetries: For all $x,y \in \Lambda$,
\be
\label{comm tJ}
\begin{split}
&\Bigl[ -\tfrac{t}{2}\sum_{\sigma = \uparrow,\downarrow}\left(c^\dagger_{\sigma, x}c_{\sigma, y}+c^\dagger_{\sigma,y}c_{\sigma,x}\right) +J \bigl( \vec{\mathcal{S}}_{x}\cdot \vec{\mathcal{S}}_{y}-\tfrac{1}{4} n_{x}n_{y} \bigr),\sum_{\sigma = \uparrow,\downarrow} n_{\sigma,x}+n_{\sigma,y} \Bigr] = 0 \\
&\Bigl[ -\tfrac{t}{2}\sum_{\sigma = \uparrow,\downarrow}\left(c^\dagger_{\sigma, x}c_{\sigma, y}+c^\dagger_{\sigma,y}c_{\sigma,x}\right) +J \bigl( \vec{\mathcal{S}}_{x}\cdot \vec{\mathcal{S}}_{y}-\tfrac{1}{4} n_{x}n_{y} \bigr),\\
&\hspace{6.4cm}\frac{1}{2} (n_{\uparrow,x} - n_{\downarrow,x}+n_{\uparrow,y}-n_{\downarrow,y}) \Bigr] = 0.
\end{split}
\ee

The analysis carried out in the Hubbard model holds for the $t$-$J$ model too, and yields bounds on the decay of various correlation functions.

\begin{theorem}\label{thm HtJM}
Let $H_{\Lambda}$ be the hamiltonian defined in \eqref{HtJM}, and let $x,y$ be two sites of the lattice $\Lambda$.
Then there exist constants $C>0$ and $\xi(\beta)>0$  (the latter depending on $\beta$, $\gamma$, $t$ and $J$, but not on $x,y\in\Lambda$) such that
\[
\left.\begin{array}{c}
|\langle c_{\uparrow,x}^\dagger c_{\downarrow,x} c_{\downarrow,y}^\dagger c_{\uparrow,y}\rangle\\
|\langle c_{\uparrow,x}^\dagger c_{\downarrow,x}^\dagger c_{\uparrow,y} c_{\downarrow,y}\rangle|\\
|\langle c_{\sigma,x}^\dagger  c_{\sigma,y}\rangle| 
\end{array}\right\}
\leq C\left(d(x,y)+1\right)^{-\xi(\beta)}
\]
with $\sigma\in\{\uparrow,\downarrow\}$. Furthermore,
\[
\lim_{\beta\rightarrow\infty}\beta\,\xi(\beta) = (128 \gamma^2\left(2|t|+|J|\right))^{-1}.
\]
\end{theorem}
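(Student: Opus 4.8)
The plan is to derive Theorem~\ref{thm HtJM} as a direct application of the general result, Theorem~\ref{thm general}, exactly as was done for the Hubbard model in the proof of Theorem~\ref{thm HHM}; the only genuinely new ingredient is the bookkeeping forced by the presence of the exchange term $J(\vec{\caS}_x\cdot\vec{\caS}_y-\tfrac14 n_x n_y)$. First I would record that \eqref{HtJM} is a sum of nearest-neighbor bond terms $h_{xy}$, each a bounded operator on $\caH_x\otimes\caH_y\simeq\bbC^4\otimes\bbC^4$, and that the model possesses the two commuting U(1) symmetries \eqref{comm tJ}: conservation of the particle number $n_x=\sum_\sigma n_{\sigma,x}$ and of the third spin component $\caS^3_x$. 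Each of the three correlation functions is then matched with the symmetry under which its left operator is charged: $\langle c^\dagger_{\uparrow,x}c_{\downarrow,x}\,c^\dagger_{\downarrow,y}c_{\uparrow,y}\rangle=\langle\caS^+_x\caS^-_y\rangle$ is treated in the spin-3 channel, where $\caS^+_x$ carries charge $+1$, while $\langle c^\dagger_{\uparrow,x}c^\dagger_{\downarrow,x}\,c_{\uparrow,y}c_{\downarrow,y}\rangle$ and $\langle c^\dagger_{\sigma,x}c_{\sigma,y}\rangle$ are treated in the number channel, where the relevant operators carry charges $+2$ and $+1$ respectively.

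The second step is to verify the one nontrivial hypothesis of Theorem~\ref{thm general}, the finiteness of the $K$-norm \eqref{norm} of the interaction uniformly in the size of $\Lambda$. Unlike the Hubbard model, where the long-range hopping forced the condition $\alpha>4$, here the interaction is strictly nearest-neighbor, so the sum defining the $K$-norm ranges only over the edges $\langle x,y\rangle\in\caE$; the perimeter constant $\gamma$ bounds the number of edges meeting any vertex, and each $h_{xy}$ has norm controlled by a fixed multiple of $|t|+|J|$. Hence the $K$-norm is finite for every finite graph of bounded $\gamma$, and Theorem~\ref{thm general} applies and yields the common power-law bound $C(d(x,y)+1)^{-\xi(\beta)}$ for all three correlators.

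It remains to extract the asymptotic constant, and this is the step requiring the most care. For each channel one must compute the ``charge content'' of $h_{xy}$ with respect to the single-site generator $G_x-G_y$ (with $G=\caS^3$ or $G=n$), i.e.\ decompose $h_{xy}$ into eigenoperators of $\mathrm{ad}_{G_x-G_y}$ and weigh each component by the square of its charge. The point to exploit is that the exchange term conserves the particle number \emph{at each site}, so it is charge-neutral in the number channel and contributes nothing there, leaving only the hopping term (charge $\pm2$ under $n_x-n_y$); in the spin-3 channel, by contrast, both the hopping term (charge $\pm1$ under $\caS^3_x-\caS^3_y$) and the exchange term (charge $\pm2$, coming from $\tfrac12(\caS^+_x\caS^-_y+\caS^-_x\caS^+_y)$) are charged and both contribute. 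Collecting the three estimates and bounding them by a single expression proportional to $2|t|+|J|$, and then optimizing the discrete-harmonic rotation profile $\{a_x\}$ against the Dirichlet-type energy --- whose constant carries the factor $\gamma^2$ --- exactly as in the Hubbard proof, one arrives at
\[
\lim_{\beta\to\infty}\beta\,\xi(\beta)=\bigl(128\,\gamma^2(2|t|+|J|)\bigr)^{-1}.
\]
The main obstacle is therefore not conceptual but the careful tracking of these numerical coefficients --- in particular, verifying that the worst of the three channels is still dominated by the combination $2|t|+|J|$ --- since the structural mechanism is identical to that underlying Theorem~\ref{thm HHM}.
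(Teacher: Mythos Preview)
Your structural plan --- reduce Theorem~\ref{thm HtJM} to Theorem~\ref{thm general} by checking, for each of the three correlators, that the relevant $O_{xy}$ satisfies \eqref{comm} with respect to the appropriate U(1) generator (spin-3 for the magnetic correlator, particle number for the other two), and that the nearest-neighbor interaction has finite $K$-norm --- is exactly the paper's approach, and this part of your proposal is correct.

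Where you diverge from the paper is in the extraction of the asymptotic constant. You propose to decompose each bond operator $h_{xy}$ into eigenoperators of $\mathrm{ad}_{G_x-G_y}$ and weigh the pieces by the squares of their charges, arguing in particular that the exchange term is charge-neutral in the number channel and so drops out there. This refinement is not what Theorem~\ref{thm general} delivers as a black box: its asymptotic formula $\lim_{\beta\to\infty}\beta\,\xi(\beta)=c^2/(8\gamma\|\Phi\|_0)$ involves only the constant $c$ from \eqref{comm} and the full operator-norm quantity $\|\Phi\|_0$ of \eqref{norm}, with no charge-weighting of the interaction. To carry out your decomposition you would have to reopen the proof of Theorem~\ref{thm general} and sharpen the estimate on $\|C\|$; that is a legitimate improvement in principle, but it is unnecessary here and you have not shown that it actually reproduces the constant $(128\gamma^2(2|t|+|J|))^{-1}$.

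The paper's route is much shorter: bound the bond operator crudely by the triangle inequality,
\[
\Bigl\|-\tfrac{t}{2}\sum_{\sigma}\bigl(c^\dagger_{\sigma,x}c_{\sigma,y}+c^\dagger_{\sigma,y}c_{\sigma,x}\bigr)+J\bigl(\vec{\caS}_x\cdot\vec{\caS}_y-\tfrac14 n_xn_y\bigr)\Bigr\|\le 2|t|+|J|,
\]
so that $\|\Phi\|_K\le 2^{2K+2}\gamma(2|t|+|J|)$ and in particular $\|\Phi\|_0\le 4\gamma(2|t|+|J|)$; then insert this into \eqref{xi_K} with the smallest of the three values of $c$ (coming from the single-particle correlator in the number channel) and optimize over $K$. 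This yields $\lim_{\beta\to\infty}\beta\,\xi(\beta)=(128\gamma^2(2|t|+|J|))^{-1}$ directly, without any charge bookkeeping. In short: your charge-content analysis is a detour --- the constant falls out of the plain norm bound on the interaction together with the worst value of $c$ among the three channels.
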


\section{General model with U(1) symmetry}
\label{sec general}

Let $(\Lambda,\caE)$ denote a finite graph, with $\Lambda$ the set of vertices and $\caE$ the set of edges, whose perimeter constant $\gamma$ is finite; see Eq.\ \eqref{def gamma}. Let $\caH_{\Lambda}$ be a finite-dimensional Hilbert space. Typically, $\mathcal{H}_{\Lambda}$ is given by a tensor product $\otimes_{x\in\Lambda} \bbC^{N}$, but we will not make use of this special structure. Let $\caB(\caH_{\Lambda})$ denote the algebra of linear operators on $\caH_{\Lambda}$. We assume that there are sub-algebras $\caB_{A} \subset \caB(\caH_\Lambda)$, with the properties that ${\bf{1}} \in \mathcal{B}_{A}$, for all $A \subset \Lambda$, and $\mathcal{B}_{A} \subseteq \mathcal{B}_{A'}$, whenever 
$A\subseteq A' \subseteq \Lambda$, and hermitian operators $(S_{x})_{x\in\Lambda}$ 
obeying the following commutation relations:
\begin{itemize}
\item[(a)] For arbitrary $x,y \in \Lambda$, with $x \neq y$, we have that
\be
\label{condition1}
[S_{x},S_{y}] = 0.
\ee
\item[(b)] For arbitrary $x\in\Lambda$, $A\subset\Lambda$, and $\Psi \in \caB_{A}$,
\be\label{condition2}
[S_{x}, \Psi] \begin{cases} = 0 & \text{if } x \notin A; \\ \in \caB_{A} & \text{if } x \in A. \end{cases}
\ee
\end{itemize}

The hamiltonian of the model is a sum of ``local'' interactions. More precisely, we assume that
\begin{equation}
\label{Ham}
H_{\Lambda} = \sum_{A\subset\Lambda} \Phi_A,
\end{equation}
where the operator $\Phi_{A}$ is hermitian and belongs to $\caB_{A}$, for all $A \subset \Lambda$. This hamiltonian is assumed to be invariant under a U(1) symmetry with generator $\sum_{x} S_{x}$, in the precise sense that
\begin{equation}
\label{symm}
\Bigl[ \Phi_A, \sum_{x\in A} S_x \Bigr]= 0,
\end{equation}
for all $A \subset \Lambda$. Without loss of generality, we assume that 
\be
\label{norm S}
\|S_x\| = 1, \forall x \in \Lambda.
\ee
We introduce a norm on the space of interactions, $\Phi_{\cdot}$, depending on a parameter $K\geq 0$; namely
\be
\|\Phi\|_K = \sup_{y\in\Lambda}\sum_{\substack{A\subset\Lambda\\{\rm s.t. }\,y\in A}}\|\Phi_A\|(|A|-1)^2({\rm diam}(A)+1)^{2K(|A|-1)+2}.
\label{norm}
\ee
Notice that this $K$-norm does not depend on possible ``one-body terms'', ($\vert A \vert =1$).

As usual, the Gibbs state $\langle (\cdot) \rangle$ is the positive, normalized linear functional that assigns the expectation value
\begin{equation}
\langle a\rangle = \frac{\Tr a \e{-\beta H_\Lambda}}{\Tr \e{-\beta H_{\Lambda}}}
\end{equation}
to each operator $a \in \caB(\caH_{\Lambda})$.

Next, we assume that there exists a ``correlator'' $O_{xy} \in \caB_\Lambda$, for some $x,y \in \Lambda$, satisfying the following commutation relation: There is a constant $c \in \bbR$ such that
\be
\label{comm}
[S_x,O_{xy}] = cO_{xy},  \text{   and   } [S_{z}, O_{xy}] = 0, \text{   for   } z\not= x,y.
\ee
Notice that there are no assumptions about the commutator between $S_y$ and $O_{xy}$.
We are now prepared to state a general version of the McBryan-Spencer-Koma-Tasaki theorem \cite{MS,KT}, claiming power-law decay of certain two-point functions for the general class of models introduced above.

\begin{theorem}
\label{thm general} 
Suppose that the constant $\gamma$ in Eq.\ \eqref{def gamma} is finite, and that $\{S_{x}\}_{x\in\Lambda}$, $(\Phi_{A})_{A\subset\Lambda}$, and $O_{xy}$ satisfy properties \eqref{condition1}--\eqref{norm S} and \eqref{comm}. Then there exist $C>0$ and $\xi(\beta)>0$ (uniform with respect to $\Lambda$ and $x,y\in\Lambda$) such that
\[
|\langle O_{xy}	\rangle| \leq C \left(d(x,y)+1\right)^{-\xi(\beta)}.
\]
Moreover, if there exists a positive constant $K$ such that $\|\Phi\|_{K}$ is bounded uniformly in $\Lambda$, then
\[
\lim_{\beta\rightarrow\infty}\beta \,\xi(\beta) = \frac{c^2}{8\gamma\|\Phi\|_0}.
\]
\end{theorem}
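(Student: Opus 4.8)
The plan is to carry out the McBryan--Spencer complex-rotation argument in the operator form of Koma--Tasaki. I would first fix a real profile $\{a_z\}_{z\in\Lambda}$, to be optimized at the end, with $a_x$ of sign opposite to $c$, decaying to $0$ as the graph distance to $x$ grows and vanishing identically once $d(x,z)\ge d(x,y)$ (so in particular $a_y=0$), and set $T=\sum_{z}a_z S_z$, a bounded hermitian operator. Because the $S_z$ commute (\eqref{condition1}), $\mathrm{ad}_T:=[T,\cdot]$ acts diagonally on $O_{xy}$: from \eqref{comm} and $a_y=0$ one gets $\e{T}O_{xy}\e{-T}=\e{a_x c}O_{xy}$. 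Conjugating by $\e{T}$ under the trace and writing $H_a:=\e{T}H\e{-T}$ yields
\[
\langle O_{xy}\rangle=\e{a_x c}\,\frac{\Tr\!\big(O_{xy}\,\e{-\beta H_a}\big)}{\Tr \e{-\beta H}}.
\]
The prefactor $\e{a_x c}$ is the ``gain'': since $a_x c<0$ it is small, and the whole argument is about showing that replacing the hermitian $H$ by the non-normal $H_a$ is cheap.

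The second step controls $\mathrm{Re}\,H_a=\tfrac12(H_a+H_a^\dagger)=\sum_A\cosh(\mathrm{ad}_T)\Phi_A$, where $H_a^\dagger=\sum_A\e{-\mathrm{ad}_T}\Phi_A$ since $T$ and $\Phi_A$ are hermitian. Here the $U(1)$-invariance \eqref{symm} is essential: as $\sum_{z\in A}[S_z,\Phi_A]=0$, for any reference value $\bar a_A$ one may replace $a_z$ by $a_z-\bar a_A$ inside $\mathrm{ad}_T\Phi_A=\sum_{z\in A}a_z[S_z,\Phi_A]$, so only \emph{differences} of the profile enter. A Jacobi-identity computation, using \eqref{condition1} and \eqref{symm}, shows each $[S_z,\Phi_A]$ is again annihilated by $\sum_{z'\in A}S_{z'}$, so the difference trick propagates to all powers of $\mathrm{ad}_T$; with $\|S_z\|=1$ (\eqref{norm S}) this gives $\|\mathrm{ad}_T^n\Phi_A\|\le(2\delta_A)^n\|\Phi_A\|$, where $\delta_A=\sum_{z\in A}|a_z-\bar a_A|$. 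Summing the $\cosh$ series yields $\|\mathrm{Re}\,H_a-H\|\le\sum_A(\cosh(2\delta_A)-1)\|\Phi_A\|=:\eta$.

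Third, I would bound the numerator by a hermitian partition function via $|\Tr(O_{xy}\e{-\beta H_a})|\le\|O_{xy}\|\,\Tr|\e{-\beta H_a}|\le\|O_{xy}\|\,\Tr \e{-\beta\,\mathrm{Re}\,H_a}$, the last step being the log-majorization of the singular values of $\e{M}$ by the exponentials of the eigenvalues of $\mathrm{Re}\,M$. Operator monotonicity of the trace exponential, applied to $-\beta\,\mathrm{Re}\,H_a\le-\beta H+\beta\eta$, then gives $\Tr \e{-\beta\,\mathrm{Re}\,H_a}\le \e{\beta\eta}\Tr \e{-\beta H}$, so that
\[
|\langle O_{xy}\rangle|\le\|O_{xy}\|\,\exp\Big(a_x c+\beta\sum_A(\cosh(2\delta_A)-1)\|\Phi_A\|\Big).
\]
Finally comes the optimization. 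I would use the logarithmic profile $a_z=-\mathrm{sgn}(c)\,\alpha\big(\log(d(x,y)+1)-\log(d(x,z)+1)\big)_+$, so the gain is $a_x c\approx-\alpha|c|\log(d(x,y)+1)$. In the relevant large-$\beta$ regime $\alpha\to0$, whence $\cosh(2\delta_A)-1\sim2\delta_A^2$; bounding $\delta_A$ by $(|A|-1)\,\mathrm{diam}(A)$ times the gradient $\sim\alpha/(d(x,z)+1)$, summing over $A$ grouped by $r=d(x,A)$, and using that at most $\sim\gamma r$ vertices lie at distance $r$ (\eqref{def gamma}) together with the $K=0$ norm \eqref{norm} to perform the inner sum, gives cost $\sim2\beta\gamma\|\Phi\|_0\,\alpha^2\log(d(x,y)+1)$. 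The exponent is then $-\big(\alpha|c|-2\beta\gamma\|\Phi\|_0\alpha^2\big)\log(d(x,y)+1)$, and optimizing at $\alpha=|c|/(4\beta\gamma\|\Phi\|_0)$ produces $\xi(\beta)=c^2/(8\beta\gamma\|\Phi\|_0)$, hence $\lim_{\beta\to\infty}\beta\,\xi(\beta)=c^2/(8\gamma\|\Phi\|_0)$. The two delicate points are: the analytic crux of passing to a hermitian partition function, which forces the correct trace inequality for the non-normal $\e{-\beta H_a}$; and the combinatorial crux of controlling the cost for \emph{general} $A$ and for \emph{all} $\beta$ --- when $\beta$ is small the optimal $\alpha$ is large, $\delta_A$ is no longer small, and one must dominate the full $\cosh(2\delta_A)\sim\e{2\delta_A}$ rather than its quadratic part. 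This is precisely why the norm \eqref{norm} carries the weight $(\mathrm{diam}(A)+1)^{2K(|A|-1)+2}$ with $K>0$; the sharp constant, by contrast, only sees the quadratic ($K=0$) part in the limit $\beta\to\infty$.
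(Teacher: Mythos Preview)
Your proposal is correct and follows the same overall McBryan--Spencer/Koma--Tasaki strategy as the paper: complex rotation by the logarithmic profile, identification of the even (hermitian) part $C=\operatorname{Re}H_a-H=\sum_A(\cosh(\mathrm{ad}_{T_A})-1)\Phi_A$, the estimate of $\|C\|$ via $\cosh u-1\le\tfrac12 u^2 e^u$ together with the $K$-norm \eqref{norm}, and the final optimization in $\kappa$. The one genuine difference is at the analytic crux. The paper bounds $|\Tr(O_{xy}e^{-\beta(H+B+C)})|$ by Trotterizing and applying the trace H\"older inequality, using that the odd part $B$ is anti-hermitian so $\|e^{-\beta B/n}\|_\infty=1$; this yields directly $\|O_{xy}\|\,\Tr(e^{-\beta H})\,e^{\beta\|C\|}$. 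You instead invoke the log-majorization $\Tr|e^{M}|\le\Tr e^{\operatorname{Re}M}$ followed by the monotonicity $\Tr e^A\le\Tr e^B$ for $A\le B$. Both routes produce the identical bound $|\langle O_{xy}\rangle|\le\|O_{xy}\|\,e^{a_xc+\beta\|C\|}$; your version is slightly more direct (no limit in $n$), while the paper's Trotter--H\"older argument is more elementary in that it avoids appealing to log-majorization results.
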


In the remainder of this section we present a proof of this theorem. We follow the method of Koma and Tasaki, which they developed in the context of the Hubbard model \cite{KT}. As in \cite{FU}, we use the H\"older inequality for traces, which simplifies the proof as compared to \cite{KT}.

\begin{proof}
The proof is based on a use of ``complex rotations'', as first introduced in \cite{MS}.
We define (imaginary) ``rotation angles'', $\{\theta_z\}_{z\in\Lambda}\in\bbR^\Lambda$, as follows:
\be\label{theta}
\theta_z =\left\{\begin{array}{cl}\kappa \log\frac{d(x,y)+1}{d(x,z)+1}&{\rm if }\;d(x,z)\leq d(x,y), \\ 0&{\rm otherwise,}\end{array}\right.
\ee
where $\kappa$ is an arbitrary positive parameter that will be used to optimize our bounds.
An operator of complex rotations is defined by
\begin{equation}
R = \prod_{z\in\Lambda} \mbox{e}^{\theta_{z} S_{z}}.
\end{equation}
For each set $A\subseteq \Lambda$, we let $x_{0}(A)$ be the site (or one of the sites) in $A$ that has minimal (Manhattan) distance from $x$. Using \eqref{symm}, we have that
\be
\label{rot1}
\begin{split}
R^{-1} H_\Lambda R &=  \sum_{A\subset\Lambda}\mbox{e}^{-\sum_{z\in A}(\theta_z-\theta_{x_0(A)})S_z}\Phi_A \,\mbox{e}^{\sum_{z\in A}(\theta_z-\theta_{x_0(A)})S_y} \\
&= \e{-T_{A}} \Phi_{A} \e{T_{A}},
\end{split}
\ee
where
\be
T_{A} = \sum_{z \in A} (\theta_{z}-\theta_{x_{0}}) S_{z}.
\ee
Recall the notation ${\rm ad}_{a}(b) := [a,b]$. We use the multi-commutator expansion to show that
\be
\begin{split}
R^{-1}H_\Lambda R &=
\sum_{A\subset\Lambda}\Phi_A +\sum_{j\geq 1}\sum_{A\subset\Lambda} \frac{(-1)^j}{j!} \, \mbox{ad}_{T_{A}}^j(\Phi_A) \\
&= H_\Lambda + B + C,
\end{split}
\ee
where
\be
B = -\sum_{\substack{j\geq 1}}\sum_{A\subset\Lambda} \frac{1}{(2j-1)!}\mbox{ad}_{T_{A}}^{2j-1}(\Phi_A),
\ee
and
\be
C = \sum_{\substack{j\geq 1\\}}\sum_{A\subset\Lambda} \frac{1}{(2j)!}\mbox{ad}_{T_{A}}^{2j}(\Phi_A).
\ee
The operator $B$ contains all terms of the multicommutator expansion odd in $T_{A}$ and is therefore anti-hermitian; $C$ contains the terms even in $T_{A}$ and, hence, is hermitian. 
Eqs.\eqref{comm} and \eqref{theta} imply that
\be\label{rotO}
R^{-1} O_{xy} R = \mbox{e}^{-c\theta_x}O_{xy}.
\ee
Next, we apply the Trotter formula and the H\"older inequality for traces of products of matrices, to find that
\be
\begin{split}
\left|\Tr \big( O_{xy}\mbox{e}^{-\beta H_\Lambda} \big) \right| &= \left|\Tr \big( R^{-1}O_{xy}R\mbox{e}^{-\beta R^{-1}H_\Lambda R} \big) \right| \\
&= \mbox{e}^{-c\theta_x}\left|\Tr  \big( O_{xy}\mbox{e}^{-\beta H_\Lambda -\beta B -\beta C} \big) \right| \\
&\leq \mbox{e}^{-c\theta_x}\lim_{n\rightarrow \infty}\left|\Tr \Big( O_{xy} \big(\mbox{e}^{-\frac{\beta}{n} H_\Lambda}\mbox{e}^{-\frac{\beta}{n} B}\mbox{e}^{ -\frac{\beta}{n}C}\big)^n \Big) \right| \\
&\leq \mbox{e}^{-c\theta_x} \lim_{n\rightarrow \infty} \|O_{xy}\|_\infty\| \,\mbox{e}^{-\frac{\beta}{n}H_\Lambda}\|_n^n \, \|\mbox{e}^{-\frac{\beta}{n}B}\|^n_{\infty} \, \|\mbox{e}^{-\frac{\beta}{n}C}\|_\infty^n.
\label{traccia1}
\end{split}
\ee
Notice that $\|\mbox{e}^{-\frac{\beta}{n}B}\|_{\infty} = 1$, because $B$ is anti-hermitian. 

Moreover, 
$\|\mbox{e}^{-\frac{\beta}{n}H_\Lambda}\|_n^n  = \Tr\e{-\beta H_{\Lambda}}$, so that, using Eq.\ \eqref{theta},
\begin{equation}
\left|\langle O_{xy}\rangle\right| \leq \e{-c\kappa \log (d(x,y)+1)} \|O_{xy}\| \mbox{e}^{\beta \|C\|}.
\end{equation}

Next, we estimate $\|C\|$. Using $\|[A,B]\| \leq 2 \|A\| \, \|B\|$, we obtain
\be
\begin{split}
\|C\|&\leq\sum_{j\geq 1} \sum_{A\subset \Lambda}\frac{2^{2j}}{(2j)!}\|T_A\|^{2j}\|\Phi_A\|\\&=\sum_{A\subset\Lambda}\|\Phi_A\| \bigl( \cosh\left(2\|T_A\|\right)-1 \bigr).
\label{normC}
\end{split}
\ee
Using the inequality $\cosh u - 1 \leq \frac12 u^2 \e{u}$, which is easily verified for all $u\geq0$, we find that
\be
\begin{split}
&\|C\|\leq 2\sum_{A\subset\Lambda}\|\Phi_A\| \|T_A\|^2 \e{2\|T_A\|}\\
&\leq2 \sum_{A\subset\Lambda}\|\Phi_A\|\left(|A|-1\right)\e{2\sum_{z\in A\backslash \{x_0(A)\}}|\theta_z -\theta_{x_0(A)}| }\sum_{z\in A\backslash \{x_0(A)\}}|\theta_z-\theta_{x_0(A)}|^2.
\end{split}
\ee
Here a bound on $\|T_A\|^2$ has been used that follows from the Cauchy-Schwarz inequality; namely
\be
\|T_A\|^2 \leq \Bigl( \sum_{y \in A \setminus \{x_0\}} |\theta_y - \theta_{x_0}| \Bigr)^2 \leq (|A|-1) \sum_{y \in A} |\theta_y - \theta_{x_0}|^2.
\ee
From the explicit form of the ``angles'' $\{\theta_z\}_{z\in\Lambda}$ displayed in Eq.\ \eqref{theta} it then follows that
\be
\begin{split}
&\|C\|\leq 2\sum_{\substack{A\subset\Lambda\;{\rm s.t.}\\d(x,x_0(A))\leq d(x,y)}} \|\Phi_A\| (|A|-1)({\rm diam}(A) +1)^{2\kappa(|A|-1)} \\
& \hspace{5cm} \cdot \sum_{z\in A\backslash \{x_0(A)\}}|\theta_z - \theta_{x_0(A)}|^2\\
&\leq2\kappa^2\sum_{\substack{A\subset\Lambda\;{\rm s.t.}\\d(x,x_0(A))\leq d(x,y)}} \|\Phi_A\|(|A|-1)^2({\rm diam}(A)+1)^{2\kappa(|A|-1)+2} \\
& \hspace{5cm} \cdot \frac{1}{(d(x,x_0(A))+1)^2}.
\end{split}
\ee
We further estimate $\|C\|$ by reorganizing the sums and using definition \eqref{norm} of the norm 
$\|\Phi\|_\kappa$. Thus
\be
\begin{split}
&\|C\|\leq 2\kappa^2\sum_{\substack{x_0\in\Lambda\;{\rm s.t.}\\d(x,x_0)\leq d(x,y)}}\frac{1}{(d(x,x_0)+1)^2}\sum_{A\ni x_0}\|\Phi_A\|(|A|-1)^2\\&\hspace{5cm}\cdot({\rm diam}(A)+1)^{2\kappa(|A|-1)+2}\\
&\leq 2\kappa^2\sum_{\substack{x_0\in\Lambda\;{\rm s.t.}\\d(x,x_0)\leq d(x,y)}}\frac{1}{(d(x,x_0)+1)^2}\|\Phi\|_\kappa.
\end{split}
\ee
Recall the definition of the perimeter constant $\gamma$ in Eq.\ \eqref{def gamma}. Since we only consider graphs $(\Lambda,\caE)$ for which $\gamma$ is finite, we have
\be
\begin{split}
\|C\|&\leq 2\kappa^2\|\Phi\|_\kappa \biggl( \sum_{r=1}^{d(x,y)}\frac{\gamma r}{(1+r)^2}+1 \biggr) \\
&\leq2\kappa^2|\Phi\|_{\kappa} \biggl( \sum_{r=1}^{d(x,y)}\frac{\gamma}{r}+1 \biggr) \\
&\leq2\kappa^2\gamma\|\Phi\|_\kappa\log(d(x,y)+1)+ 2\kappa^2\|\Phi\|_\kappa.
\end{split}
\ee
We conclude that, for all $\kappa >0$,
\be\label{upperbd}
|\langle O_{xy}\rangle|\leq \caC_\kappa (d(x,y)+1)^{-(\kappa c -2\kappa^2\gamma\|\Phi\|_\kappa \beta)}
\ee
where $\caC_\kappa = \|O_{xy}\| \e{2\kappa^2 \|\Phi\|_\kappa}$ and $c$ is the constant defined in equation\eqref{comm} and used in equation \eqref{rotO}. 
Next, we verify that the exponent on the right side of \eqref{upperbd} is
$\propto \frac{1}{\beta}$, for $\beta$ large enough. Choosing $\kappa = \frac{K}{\beta}$, this exponent is given by
\be
\xi_K(\beta)= \tfrac{K}{\beta} \bigl( c -2K\gamma\|\Phi\|_{\frac{K}{\beta}} \bigr).
\label{xi_K}
\ee
Recall that in the last part of Theorem \ref{thm general} it is assumed that there is a constant $\tilde{K}$ such that the $\tilde{K}$-norm of the interaction $\Phi$ converges, independently of $\Lambda$. Applying dominated convergence to $\|\Phi\|_K$, we then get
\be
\lim_{\beta\rightarrow\infty}\beta\, \xi_K(\beta) = Kc-2K^2\gamma \|\Phi\|_0.
\ee
The optimal value of $K$ is $K^{*}= c/(4\gamma\|\Phi\|_0)$. We define $\xi(\beta)= \xi_{K^*}(\beta)$, and substitute $\caC_\kappa$ with $\caC_{\frac{K^*}{\beta}}$ in Eq.\ \eqref{upperbd}. This completes the proof.
\end{proof}

\section{Applications of the general theorem to the explicit examples}
\label{sec proofs}

In this section we sketch the proofs of the theorems stated in Sections \ref{sec spins}--\ref{sec Hubbard}. They are all straightforward applications of Theorem \ref{thm general}.

\begin{proof}[Proof of Theorem \ref{thm HSM}]

The interaction defining the hamiltonian has finite $K$-norm, for any $K>0$.
\be
\|\Phi\|_K = 2^{2K +2} \sup_{z\in\Lambda}\sum_{w\sim z} \Bigl\| \sum_{l = 1}^{2s} c_l(w,z) (\vec{\mathcal{S}}_w\cdot\vec{\mathcal{S}}_z)^l \Bigr\| \leq 2^{2K+2}\gamma.
\label{boundnormHSM}
\ee
The bound follows from the triangular inequality and the assumption in Eq.\ \eqref{assumptionHSM}. Let $S_x =\frac{1}{s}\mathcal{S}^3_x$. It is bounded with norm 1 and $S_x + S_y$ commutes with the local hamiltonian so it provides the U(1) symmetry of Eq.\ \eqref{comm spin}. Let $O_{xy} = \mathcal{S}^+_x\mathcal{O}_y$ for some $\mathcal{O}_y\in\caB_y$. It is bounded and
\begin{equation}
\left[S_x, O_{xy}\right]=s^{-1}O_{xy}.
\end{equation}
Then, the value of $c$ as defined in Theorem \ref{thm general}, Eq.\ \eqref{comm}, is $c=s^{-1}$.
The result is now a straightforward application of Theorem \ref{thm general}. Consider $\xi_K(\beta)$ as defined in the proof of the general Theorem \ref{thm general}, Eq.\ \eqref{xi_K}. 
We get from Eq.\ \eqref{boundnormHSM}
\be
\xi_K(\beta)\geq\tfrac{K}{\beta} \bigl( s^{-1} - 8K^2\gamma^2 2^{\frac{2K}\beta}\bigr)=\tilde{\xi}_K(\beta).
\ee
It is clear that $\lim_{\beta\rightarrow\infty} \beta\, \tilde{\xi}_K (\beta) =\frac{K}s-8K^2\gamma^2$. By optimising with respect to $K$, we get the first statement of the theorem by defining $\xi(\beta) = \tilde{\xi}_{K^*}(\beta)$ where $K^*$ is the optimal value of $K$.

Using the SU(2) invariance of the Gibbs state of the model, which follows from Eqs. \eqref{comm spin} and \eqref{HSM}, we see that
\begin{equation}
\langle \mathcal{S}^1_x \mathcal{S}^1_y \rangle = \langle \mathcal{S}^2_x \mathcal{S}^2_y \rangle = \langle \mathcal{S}^3_x \mathcal{S}^3_y \rangle.
\end{equation}
The definition of $\mathcal{S}^+_x$ and $\mathcal{S}^-_x$ then implies that
\begin{equation}
\langle \mathcal{S}^+_x \mathcal{S}^-_y \rangle = 2\langle \mathcal{S}^1_x \mathcal{S}^1_y\rangle = 2\langle \mathcal{S}^2_x \mathcal{S}^2_y \rangle.
\end{equation}
The second claim in Theorem \ref{thm HSM} is a special case of the first one.
\end{proof}

\begin{proof}[Proof of Theorem \ref{thm loops}]
For $\theta$ an integer larger than 1, the loop model is equivalent to a quantum spin model \cite{Toth, AN, Uel}. Let $\theta = 2s + 1$ with $s \in \frac{1}{2}\bbN$.
We introduce operators acting on $\bbC^{\theta}\otimes \bbC^{\theta}$, namely
\be
\begin{split}
&T\, e_i\otimes e_j = e_i\otimes e_j,\\
&(e_i\otimes e_j, Q \,e_l\otimes e_k) =\delta_{i,j}\delta_{l,k},
\end{split}
\ee
where $\{e_j\}_{j=1}^\theta$ denotes the canonical basis of $\bbC^\theta$.
Then we consider the Hilbert space $\caH_\Lambda = \otimes_{x\in\Lambda}\bbC^{\theta}$, and the hamiltonian 
\begin{equation}\label{ham}
H_\Lambda = -\sum_{\{x,y\}\in\caE}\left(uT_{xy}+(1-u)Q_{xy}-1\right).
\end{equation}
Here, $T_{xy}$ stands for $T\otimes\mathds{1}_{\Lambda\backslash\{x,y\}}$, and similarly for $Q_{xy}$.  The corresponding Gibbs state at inverse temperature $\beta$ is $\langle\cdot\rangle=\Tr \cdot \e{-\beta H_\Lambda}/\Tr \e{-\beta H_\Lambda}$.
It can be shown that, for any $u\in [0,1]$, the partition function $Z$ in Eq.\ \eqref{Z_u} is equal to the quantum partition function, namely
\be
Z = \Tr_{\caH_\Lambda} \e{-\beta H_\Lambda}.
\ee

It can be shown \cite{Uel} that for any $x,y\in\Lambda$,
\be
\begin{split}
&[T_{xy}, \caS^i_x+\caS^i_y] = 0, \quad i=1,2,3\\
&[Q_{xy}, \caS^2_x+\caS^2_y] = 0.
\end{split}
\ee
This implies that we can set $S_x:= \frac{1}{s}\caS^2_x$, which has the right commutation relations with the hamiltonian in \eqref{ham}. Moreover, it is easy to check that the interaction has finite
 K-norm, for any $K>0$:
\be
\|\Phi\|_K = 2^{2K+2}\sup_{y\in\Lambda}\sum_{x\sim y} \| u T_{xy} + (1-u) Q_{xy}-1\|\leq 2^{2K+2}\gamma\left(u+(1-u)\theta+1\right).\label{boundnormloops}
\ee
The bound follows from the triangular inequality and from $\|T\|=1$, $\|Q\|=\theta$.
Let $\mathcal{Q}^\pm = \caS^1\pm i\caS^3$.
Then for any $x,y\in\Lambda$ and $\mathcal{O}\in\mathcal{B}_y$,
\be
\left[ S_x, \mathcal{Q}^+_x\mathcal{O}_y\right] = s^{-1}\mathcal{Q}^+_x \mathcal{O}_y,
\ee
i.e. $c=s^{-1}$ in Theorem \ref{thm general}.

The bounds in Theorem \ref{thm general} can be applied to the correlator 
$\langle \mathcal{Q}^+_x\mathcal{Q}^-_y\rangle$. Indeed, let $\xi_K(\beta)$ be as defined in the proof of Theorem \ref{thm general}, Eq.\ \eqref{xi_K}. From Eq.\ \eqref{boundnormloops}, we have
\be
\xi_K(\beta)\geq \frac{K}{\beta}\left(\frac{2}{\theta-1}-8K\gamma^2 2^{\frac{2K}{\beta}}(u+(1-u)\theta+1)\right)=\tilde{\xi}_K(\beta).
\ee
Moreover, we have that
\be
\lim_{\beta\rightarrow\infty}\beta\,\tilde{\xi}_K(\beta) =\frac{2K}{\theta-1}-8 K^2\gamma^2\left(u+\theta(1-u)+1\right).
\ee
Optimizing in $K$ and definining $\xi(\beta) = \tilde{\xi}_{K^*}(\beta)$, where $K^*$ is the optimal value of 
$K$, one obtains the result for $\langle \mathcal{Q}^+_x\mathcal{Q}^-_y\rangle$.

Due to the symmetry of the model,
\be
\langle \mathcal{S}^1_x\mathcal{S}^1_y\rangle =  \langle \mathcal{S}^3_x\mathcal{S}^3_y\rangle.
\ee
By the definition of $\mathcal{Q}^\pm$, we then find that
\be
\langle \mathcal{Q}^+_x\mathcal{Q}^-_y\rangle =2\langle \mathcal{S}^1_x\mathcal{S}^1_y\rangle  = 2\langle \mathcal{S}^3_x\mathcal{S}^3_y\rangle .
\ee
Thus, the result is proven for the correlation functions $\langle \caS^3_x\caS^3_y\rangle$.
The statement concerning the probability of two sites being connected follows from
\be
\langle \caS^3_x\caS^3_y\rangle=\frac{1}{12}(\theta^2-1)\mathbb{P}(x\leftrightarrow y).
\ee
See \cite{Uel} for a proof of this statement.
\end{proof}

\begin{proof}[Proof of Theorem \ref{thm HHM}]

It can be easily checked that the $K$-norm of the interaction associated to the hamiltonian is
\be
\label{finite norm}
\begin{split}
\|\Phi\|_K &= \sup_{x\in\Lambda}\sum_{y\in\Lambda} \Bigl\| \sum_\sigma c^\dagger_{\sigma,x}c_{\sigma,y} +c^\dagger_{\sigma,y}c_{\sigma,x} \Bigr\| \frac{|t|}{2} (d(x,y)+1)^{-(\alpha-2K-2)}.
\end{split}
\ee
First, notice that the one-body potential $V(\{n_{\uparrow,x}\}_{x\in\Lambda},\{n_{\downarrow,x}\}_{x\in\Lambda})$ does not play any role. 
Second, we would like the norm of the interaction to be independent of $\Lambda$, i.e., to be finite no matter what the size of $\Lambda$ is. By the triangular inequality and given the definition of $\gamma$,
\be
\|\Phi\|_K \leq 2|t|\gamma\sum_{r\geq1}r^{-(\alpha - 2K-3)}.
\ee
We note that, for any $\alpha>4$, there exists $K>0$ such that $\alpha>2K+4$. This ensures the existence of positive values of $K$ with the property that $\|\Phi\|_K$ is uniformly bounded in the size of $\Lambda$, as required in Theorem \ref{thm general}.

Let us focus our attention on the first two-point function.
We set $S_z =\left(n_{\uparrow,z} - n_{\downarrow, z}\right)$ $=2\mathcal{S}^3_z$, according to Eq.\ \eqref{SU(2)a}. These are bounded operators commuting with the hamiltonian, because the Hubbard hamiltonian exhibits a U(1) invariance corresponding to the conservation of the component of the total spin along the third axis; see Eq.\ \eqref{comm s3 Hubbard}.
Let $O_{xy}=c^\dagger_{\uparrow,x}c_{\downarrow,x}\mathcal{O}_y$, with $\mathcal{O}_y\in\caB_y$. Then
\begin{equation}
\left[S_x, O_{xy}\right]=2O_{xy}, \,\,\text{   and   }\,\,[S_z,O_{xy}]= 0,\,\,\forall z \not= x,y.
\end{equation}
Thus the constant $c$ in Theorem \ref{thm general} is given by $c=2$.

Next, we study the second correlator.
As seen in Section \ref{sec Hubbard}, the hamiltonian exhibits a U(1)- symmetry, thanks to the conservation of the number of particles; see Eq.\ \eqref{U(1) Hubbard}. Hence we can choose $S_x =\frac{1}{2}\left( n_{\uparrow,x}+n_{\downarrow,x}\right)$. 

Let $O_{xy} = c^\dagger_{\uparrow,x}c^\dagger_{\downarrow,x}\mathcal{O}_y$ for some $\mathcal{O}_y\in\mathcal{B}_y$. Then
\be
\left[S_x, O_{xy}\right]=O_{xy}, \quad [S_z, O_{xy}]=0, \,\, \forall z\not= x,y
\ee
and $c=1$ in Theorem \ref{thm general}. 

In the analysis of the third correlator, we also choose $S_z = \frac{1}{2}(n_{\uparrow,z}+n_{\downarrow, z})$.
Let $O_{xy} = c^\dagger_{\sigma,x}\mathcal{O}_y$, for an arbitrary $\sigma$ and an arbitrary operator
$\mathcal{O}_y\in\mathcal{B}_y$. Then
\be
[S_x, O_{xy}]=\frac{1}{2}O_{xy}, \,\, \text{ and }\,\, [S_z, O_{xy}]=0, \,\, \forall z \not= x,y.
\ee

The theorem is now a straightforward application of Theorem \ref{thm general}.
Indeed, let $\xi_K(\beta)$ be defined as in Eq.\ \eqref{xi_K}. In all three examples,
\be
\xi_K(\beta)\geq \frac{K}{2\beta}\left(1-8K\gamma^2 |t| \tau^{\frac{2K}{\beta}}\right) = \tilde{\xi}_K(\beta),
\ee
with $\tau^{\frac{2K}{\beta}}=\sum_{r\geq 1}r^{-\alpha+3 +\frac{2K}{\beta}}$, as is easily checked.
By dominated convergence,
\be
\lim_{\beta\rightarrow\infty} \beta\,\tilde{\xi}_K(\beta) = \frac{K}{2}(1-8K\gamma^2|t|\tau^0).
\ee
Optimizing in $K$ and defining $\xi(\beta)=\tilde{\xi}_K^*(\beta)$, where $K^*$ is the optimal value of $K$, the theorem follows, (after choosing $\mathcal{O}_y = c^\dagger_{\downarrow,y}c_{\uparrow,y}$ in the first case, $c_{\uparrow,y}c_{\downarrow,y}$  in the second case, and $ c_{\sigma,y}$ in the third case).

Notice that $\alpha>4$ is needed for $\tau^0$ to be well defined.
\end{proof}

\begin{proof}[Proof of Theorem \ref{thm HtJM}] 
The interaction defining the $t$-$J$ model has finite $K$-norm for any value of $K$ and it can be explicitly evaluated:
\be
\|\Phi\|_K =2^{2K+2}\sup_{x\in\Lambda}\sum_{x\sim y} \Bigl\|-\tfrac{t}{2}\sum_{\sigma}(c^\dagger_{\sigma,x}c_{\sigma_y}+c^\dagger_{\sigma,y}c_x)+J \bigl( \vec{\caS}_x\cdot\vec{\caS}_y-\tfrac{1}{4}n_x n_y\bigr)\Bigr\|.
\ee
We can bound $\|\Phi\|_K$ using the triangular inequality; by the definition of $\gamma$,
\be
\|\Phi\|_K\leq 2^{2K+2}\gamma\left(2|t|+|J|\right)\label{boundnormHtJM}.
\ee

To bound the first correlator, we set $S_z := \left(n_{\uparrow,z}-n_{\downarrow,z}\right)$. This operator commutes with the hamiltonian; see the second equation in \eqref{comm tJ}. Moreover, it is bounded, with norm equal to 1.
Let $O_{xy} = c^\dagger_{\uparrow,x}c_{\downarrow,x}\mathcal{O}_y$ with $\mathcal{O}_y\in\caB_y$. Then Eq. \eqref{comm} holds with $c=2$ (see Theorem \ref{thm general}).

To deal with the second correlator, we set $S_z := n_{\uparrow,z}+ n_{\downarrow,z}$. The hamiltonian conserves the number of particles, which corresponds to the U(1)- symmetry in the first equation of \eqref{comm tJ}. Thus, this operator commutes with the hamiltonian.
Let  $O_{xy}= c^\dagger_{\uparrow,x}c^\dagger_{\downarrow,x}\mathcal{O}_y$ with $\mathcal{O}_y\in\caB_y$. Then Eq. \eqref{comm} holds with $c=1$ (see Theorem \ref{thm general}).

For the third correlator, we choose $S_x :=\frac{1}{2}( n_{\uparrow, x}+ n_{\downarrow, x}$) as before.
Let $O_{xy} = c^\dagger_{\sigma, x}\mathcal{O}_y$, for any choice of $\sigma$, and an arbitrary 
$\mathcal{O}_y\in\caB_y$. Then Eq. \eqref{comm} holds with $c=2$ (see Theorem \ref{thm general}).
 
The result then follows from straightforward application of Theorem \ref{thm general} to all three cases. Indeed, let $\xi_K(\beta)$ be as in Eq.\ \eqref{xi_K}. Given the bound in Eq.\ \eqref{boundnormHtJM} and the values of $c$ in the three cases considered here, we find that
\be
\xi_K(\beta) \geq \frac{K}{2\beta}\left(1-16K\gamma^2(2|t|+|J|)2^{\frac{2K}{\beta}}\right)=\tilde{\xi}_K(\beta).
\ee
Obviously
\be\lim_{\beta\rightarrow\infty}\beta\,\tilde{\xi}_K(\beta) = \frac{K}{2}(1-16 K\gamma^2 (2|t|+|J|)).\ee
The theorem now follows by optimizing in $K$ and by defining $\xi(\beta)=\tilde{\xi}_{K^*}(\beta)$, where $K^*$ is the optimal value of $K$. We set $\mathcal{O}_y :=c^\dagger_{\downarrow,y}c_{\uparrow,y} $, in the first case, $\mathcal{O}_y :=c_{\uparrow, y}c_{\downarrow,y}$, in the second case, and $\mathcal{O}_y = c_{\sigma,y}$, in the last case.
\end{proof}

\paragraph{Acknowledgment}
We are grateful to the referee for useful comments.

\end{document}